\documentclass[12pt,a4paper]{article}

\usepackage{amsmath,amssymb,amsthm}
\usepackage[margin=2.5cm]{geometry}
\usepackage[colorlinks=true,linkcolor=blue,citecolor=blue,urlcolor=blue]{hyperref}
\usepackage{enumitem}
\usepackage{booktabs}
\usepackage{graphicx}
\usepackage{float}

\theoremstyle{plain}
\newtheorem{theorem}{Theorem}[section]
\newtheorem{proposition}[theorem]{Proposition}

\theoremstyle{definition}

\theoremstyle{remark}
\newtheorem{remark}[theorem]{Remark}

\newcommand{\R}{\mathbb{R}}
\newcommand{\diag}{\operatorname{diag}}

\newcommand{\nhat}{\hat{\mathbf{n}}}

\title{A Metric for Three-Dimensional Color Discrimination\\
Derived from V1 Population Fisher Information}

\author{Michael Menke}
\date{}

\begin{document}
\maketitle

\begin{abstract}
\noindent
We derive a Riemannian metric on three-dimensional color space from
the Fisher information of neural population codes in the visual pathway.  Photoreceptor adaptation, retinal opponent channels, and cortical population encoding each map onto a geometric construction, producing
a metric tensor whose components correspond to measurable neural
quantities.  The resulting 17-parameter model is fitted jointly to
four independent threshold datasets: MacAdam's~(1942) chromaticity
ellipses, the Koenderink et~al.\ (2026) three-dimensional
ellipsoids, Wright's~(1941) wavelength discrimination function,
and the Huang et~al.\ (2012) threshold color difference ellipses,
covering 96 independently measured discrimination conditions across
varied chromaticities and luminances. The joint fit achieves STRESS of 23.9 on MacAdam, 20.8 on Koenderink et al., 30.1 on Wright, and 30.8 on Huang et al.  
\end{abstract}

\section{Introduction}\label{sec:intro}

Defining a perceptually meaningful metric on color space has been a
central problem in color science since Helmholtz~\cite{helmholtz91}
proposed a Riemannian metric based on Weber's law.  Subsequent
proposals by Schr\"odinger~\cite{schrodinger20} and
Stiles~\cite{stiles46} refined the metric but could not
accommodate later empirical data.  MacAdam's~\cite{macadam42}
measurements of chromatic discrimination ellipses shifted the field
from theoretical metrics to empirical characterization,
establishing that perceptual color space is locally anisotropic and
spatially inhomogeneous.  Modern color difference formulas such as
CIEDE2000~\cite{luo01} represent decades of empirical refinement but
remain parametric corrections layered onto a pre-existing coordinate
system.
\\
\\
\noindent
Here we take a different approach: we derive the discrimination metric
from the Fisher information of a population of chromatically tuned V1
neurons.  Each component of the metric tensor traces to a specific
neural population with a specific noise model and transduction
function, and each free parameter corresponds to a quantity
constrained by neurophysiology.  Two cross-channel normalization
mechanisms, koniocellular suppression by L-M opponent activity and
S-cone disinhibition of the non-cardinal population, provide
chromaticity-dependent gain control that resolves tradeoffs between
near-white and spectrum-locus chromaticities.  The model is validated
against four independent psychophysical datasets spanning chromaticity
ellipses, three-dimensional ellipsoids, monochromatic wavelength
discrimination, and threshold colour differences at varied
chromaticities and luminances.

\section{Photoreceptor Adaptation}\label{sec:adaptation}

\subsection{Cone excitations}

A color stimulus with CIE 1931 tristimulus values $(X,Y,Z)$ produces
cone excitations via the Smith--Pokorny~\cite{smith75}
transformation, based on the cone spectral sensitivities of
Bowmaker \& Dartnall~\cite{bowmaker80}:
\begin{equation}\label{eq:cone}
  \begin{pmatrix} L \\ M \\ S \end{pmatrix}
  = M_{\mathrm{SP}}
  \begin{pmatrix} X \\ Y \\ Z \end{pmatrix},
  \qquad
  M_{\mathrm{SP}} =
  \begin{pmatrix}
    0.15514 & \phantom{-}0.54312 & -0.03286 \\
   -0.15514 & \phantom{-}0.45684 & \phantom{-}0.03286 \\
    0 & \phantom{-}0 & \phantom{-}0.01608
  \end{pmatrix}.
\end{equation}

\subsection{Von Kries adaptation}

Chromatic adaptation is modeled by independent gain control on each
cone class~\cite{vonkries05}:
\begin{equation}\label{eq:vonkries}
  L_a = L/L_W, \qquad M_a = M/M_W, \qquad S_a = S/S_W,
\end{equation}
where $(L_W, M_W, S_W)$ are the cone excitations of the adaptation
illuminant at a reference luminance $Y_{\mathrm{ref}}$.  This
multiplicative normalization is supported by the physiology of cone
photoreceptor gain control~\cite{schneeweis99} and by psychophysical
evidence for the approximate validity of von~Kries
adaptation~\cite{brainard92}.  The adapted signals are dimensionless
ratios centered at unity for the adaptation white:
$L_a(\text{white}) = M_a(\text{white}) = S_a(\text{white}) = 1$.

\section{Opponent Channel Formation}\label{sec:opponent}

\subsection{Parvocellular coordinate}

The midget ganglion cells of the primate retina compute L versus M
opponency~\cite{dacey00,kaplan86}.  Weber's law ($\Delta L / L
\approx \text{const}$; \cite{hood86}) implies that the coordinate
rendering equal stimulus steps into equal perceptual distances is the
logarithm of the L/M ratio.  However, midget cells also exhibit
contrast gain control~\cite{kaplan86}, and their response saturates
at high chromatic contrasts.  To capture both the near-neutral Weber
regime and the saturation at extreme chromaticities, we pass the
log-ratio through a Naka--Rushton~\cite{naka66} compressive
nonlinearity:
\begin{equation}\label{eq:z}
  u = \log(L_a / M_a), \qquad
  z = z_{\max}\;\mathrm{sign}(u)\;
  \frac{|u|^{p_z}}{|u|^{p_z} + \kappa_z},
\end{equation}
where $z_{\max}$ is the saturation ceiling, $p_z$ controls the shape
of the compression, and $\kappa_z$ is the half-saturation constant.
Near the adaptation white ($|u| \ll \kappa_z^{1/p_z}$), the
coordinate reduces to $z \approx (z_{\max} / \kappa_z)\, u$,
recovering Weber's law with effective gain
$z_{\max} / \kappa_z$.  At extreme chromaticities
($|u| \gg \kappa_z^{1/p_z}$), $z$ saturates toward $\pm z_{\max}$.
This parallels the Naka--Rushton nonlinearity applied to the
koniocellular channel (Section~\ref{sec:konio})
and treats both opponent pathways symmetrically.

\begin{proposition}\label{prop:invariance}
The coordinate $z$ depends only on chromaticity, not on luminance.
That is, changing $Y$ at fixed $(x,y)$ leaves $z$ unchanged.
\end{proposition}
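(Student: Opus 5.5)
The plan is to show that the ratio $L_a/M_a$, and hence $u$ and $z$, is invariant under scaling the stimulus luminance at fixed chromaticity. I will first make ``changing $Y$ at fixed $(x,y)$'' precise. With $x = X/(X+Y+Z)$ and $y = Y/(X+Y+Z)$, a stimulus of chromaticity $(x,y)$ and luminance $Y$ has tristimulus values
\[
(X,Y,Z) = \frac{Y}{y}\,\bigl(x,\; y,\; 1-x-y\bigr),
\]
assuming $y>0$. Changing $Y$ to $\lambda Y$ with $\lambda>0$ while holding $(x,y)$ fixed therefore multiplies the whole tristimulus vector by $\lambda$.

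Next I will push this scaling through the pipeline.
\begin{enumerate}[label=(\roman*)]
\item By linearity of \eqref{eq:cone}, $(L,M,S)$ becomes $\lambda(L,M,S)$.
\item The von Kries normalization \eqref{eq:vonkries} divides by the fixed constants $(L_W,M_W,S_W)$. These are the excitations of the adaptation illuminant at the reference luminance $Y_{\mathrm{ref}}$, so they do not depend on the stimulus. Hence $(L_a,M_a,S_a)$ is also multiplied by $\lambda$.
\item The ratio therefore satisfies $L_a/M_a \mapsto (\lambda L_a)/(\lambda M_a) = L_a/M_a$. So $u=\log(L_a/M_a)$ is unchanged.
\item By \eqref{eq:z}, $z$ is a function of $u$ alone, with $z_{\max}$, $p_z$ and $\kappa_z$ fixed parameters. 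So $z$ is unchanged.
\end{enumerate}

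The main point needing care is the scope of the statement. The argument requires that $L_W$ and $M_W$ are held fixed, i.e.\ that the adaptation state does not co-vary with the stimulus's $Y$. I will state this explicitly, since the paper fixes the adaptation white at $Y_{\mathrm{ref}}$. If the adaptation state were instead tied to stimulus luminance, I would note that any common rescaling of $(L_W,M_W)$ also cancels in the ratio, so the conclusion still holds.

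Two domain conditions should also be stated. First, $L,M>0$ so that the logarithm is defined; this holds for physical stimuli, since the L and M cone sensitivities are positive. Second, $y>0$ so that the chromaticity parametrization is valid. With these noted, the proof is a short chain of cancellations.
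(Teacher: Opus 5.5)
Your proof is correct and follows essentially the same route as the paper: at fixed $(x,y)$ the tristimulus vector scales with $Y$, the linear map $M_{\mathrm{SP}}$ and the fixed von Kries gains preserve this common factor, and it cancels in $L_a/M_a$, so $u$ and hence $z$ are unchanged. Your explicit remarks that the adaptation white is held fixed and that the domain conditions $y>0$ and $L,M>0$ are needed are sensible additions; the paper leaves them implicit.
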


\begin{proof}
At fixed chromaticity $(x,y)$, the tristimulus values
$(X,Y,Z)$ are proportional to $Y$:
$X = Yx/y$, $Z = Y(1-x-y)/y$.  Since $M_{\mathrm{SP}}$ is linear,
all cone signals scale by the same factor $Y'/Y$ under a luminance
change.  This common factor cancels in the ratio $L_a/M_a$, leaving
$z$ unchanged.  The same argument applies to $r = S_a/(L_a + M_a)$
and therefore to $w$.
\end{proof}

\subsection{Koniocellular coordinate}\label{sec:konio}

The small bistratified ganglion cells compute S versus $(L+M)$
opponency~\cite{dacey94}.  The relevant signal is the ratio
\begin{equation}\label{eq:r}
  r = S_a / (L_a + M_a),
\end{equation}
which is also luminance-invariant by the same argument.  At the adaptation
white, $r = 1/(1+1) = 0.5$.

This ratio is passed through a Naka--Rushton~\cite{naka66}
compressive nonlinearity:
\begin{equation}\label{eq:w}
  w = \frac{r^{p_s}}{r^{p_s} + \kappa_s},
\end{equation}
modeling the saturation of the koniocellular pathway's
contrast-response function~\cite{tailby08}.  The function
$w\colon\R^+ \to (0,1)$ is monotonically increasing, with
$w \to 0$ as $r \to 0$ and $w \to 1$ as $r \to \infty$.

\begin{remark}
The parvocellular channel compresses $\log(L_a/M_a)$; the koniocellular
channel compresses $S_a/(L_a + M_a)$.  The two pathways retain
different exponents and half-saturation constants, reflecting the
genuine physiological differences between them: the parvocellular
pathway (${\sim}80\%$ of LGN neurons) has higher contrast gain and
broader dynamic range~\cite{kaplan86}, while the koniocellular pathway
(${\sim}10\%$ of LGN neurons) has narrower dynamic range and stronger
compressive nonlinearity~\cite{hendry00,tailby08}. 
\end{remark}

\section{Cortical Population Fisher Information}\label{sec:fisher}

\subsection{Fisher information for a single Poisson neuron}
 
The link between neural encoding and Riemannian geometry is Fisher
information.  Consider a single neuron whose spike count $r$ on a
given trial is drawn from a Poisson distribution with mean firing
rate $f(\mathbf{s})$, where $\mathbf{s} = (s_1, s_2)$ is a
two-dimensional stimulus (here, the opponent coordinates $(z,w)$).
The Poisson probability of observing $r$ spikes is
$P(r) = f^r\,e^{-f}/r!$, so the log-likelihood is
\[
  \ell = r \log f(\mathbf{s}) - f(\mathbf{s}) - \log(r!).
\]
The Fisher information matrix is defined as the expected negative
Hessian of the log-likelihood:
\[
  I_{ij}(\mathbf{s})
  = -\mathbb{E}\!\left[\frac{\partial^2 \ell}
  {\partial s_i\,\partial s_j}\right].
\]
To evaluate this for the Poisson case, we compute the derivatives
of $\ell$.  The first partial derivative is
\[
  \frac{\partial \ell}{\partial s_i}
  = \frac{r}{f}\frac{\partial f}{\partial s_i}
  - \frac{\partial f}{\partial s_i}
  = \left(\frac{r}{f} - 1\right)\frac{\partial f}{\partial s_i}.
\]
Differentiating again,
\[
  \frac{\partial^2 \ell}{\partial s_i\,\partial s_j}
  = -\frac{r}{f^2}
  \frac{\partial f}{\partial s_i}\frac{\partial f}{\partial s_j}
  + \left(\frac{r}{f} - 1\right)
  \frac{\partial^2 f}{\partial s_i\,\partial s_j}.
\]
Taking the expectation with $\mathbb{E}[r] = f$, the second term
vanishes (since $\mathbb{E}[r/f - 1] = 0$), leaving
\begin{equation}\label{eq:fisher_general}
  I_{ij}(\mathbf{s})
  = -\mathbb{E}\!\left[\frac{\partial^2 \ell}
  {\partial s_i\,\partial s_j}\right]
  = \frac{1}{f(\mathbf{s})}
  \frac{\partial f}{\partial s_i}
  \frac{\partial f}{\partial s_j}.
\end{equation}
In matrix form, $I = f^{-1}\,\nabla f\,\nabla f^T$, where
$\nabla f = (\partial f/\partial s_1,\;\partial f/\partial s_2)^T$.
The Poisson Fisher information depends only on the tuning curve
$f(\mathbf{s})$ and its gradient; higher-order properties of the
spike count distribution do not enter.

\subsection{From Fisher information to discrimination ellipsoids}
 
The Fisher information matrix is the natural metric tensor for
discrimination because it determines the smallest detectable stimulus
change.  To see why, consider an observer who must estimate the
stimulus $\mathbf{s}$ from the spike counts of a neural population.
Any estimator $\hat{\mathbf{s}}$ will fluctuate from trial to trial
due to the randomness of spiking.  The covariance matrix
$\mathrm{Cov}[\hat{\mathbf{s}}]$ describes the shape and size of
these fluctuations: its diagonal entries are the variances of each
coordinate estimate, its off-diagonal entries capture correlations
between estimation errors, and its eigenvectors define the directions
in stimulus space along which the estimate is most and least precise.
\\
\\
The Cram\'er--Rao inequality~\cite{cover06} places a lower bound on
these fluctuations.  For any unbiased estimator,
\begin{equation}\label{eq:cramer_rao}
  \mathrm{Cov}[\hat{\mathbf{s}}]
  \;\succeq\; I(\mathbf{s})^{-1},
\end{equation}
where $\succeq$ denotes the positive semi-definite ordering, meaning
$\mathrm{Cov} - I^{-1}$ has no negative eigenvalues.  In words: no
estimation strategy can reduce the trial-to-trial variability below
$I^{-1}$ in any direction.  Directions in stimulus space where the
Fisher information has large eigenvalues
permit precise estimation; directions where the eigenvalues are small
produce large estimation noise.
\\
\\
Now consider an ideal observer who must decide whether the stimulus
is $\mathbf{s}$ (no change) or $\mathbf{s} + \Delta\mathbf{s}$
(a small shift).  Under the null hypothesis the stimulus estimate
is distributed approximately as
$\hat{\mathbf{s}} \sim \mathcal{N}(\mathbf{s},\; I^{-1})$, and
under the alternative as
$\hat{\mathbf{s}} \sim \mathcal{N}(\mathbf{s} + \Delta\mathbf{s},\;
I^{-1})$.  These are two multivariate Gaussians with the same
covariance but shifted means.  The optimal discriminability between
them is the Mahalanobis distance of the mean shift measured in the
inverse covariance of the noise:
\begin{equation}\label{eq:dprime}
  (d')^2 = \Delta\mathbf{s}^T\,
  \bigl(I^{-1}\bigr)^{-1}\, \Delta\mathbf{s}
  = \Delta\mathbf{s}^T\, I(\mathbf{s})\, \Delta\mathbf{s}.
\end{equation}
This is a standard result in signal detection
theory~\cite{dayan01}: for two Gaussians with shared covariance
$\Sigma$, the detection sensitivity is
$d'^2 = \Delta\boldsymbol{\mu}^T \Sigma^{-1} \Delta\boldsymbol{\mu}$,
and here $\Sigma = I^{-1}$ so $\Sigma^{-1} = I$.
\\
\\
Threshold corresponds to a fixed criterion
$d' = c$, giving the set of just-detectable shifts as
\[
  \bigl\{\Delta\mathbf{s}
  : \Delta\mathbf{s}^T\, I\, \Delta\mathbf{s} = c^2 \bigr\},
\]
which is an ellipsoid in stimulus space with shape matrix
$\Sigma = c^2\, I^{-1}$.  The semi-axes of this ellipsoid are
$c/\sqrt{\mu_k}$, where $\mu_k$ are the eigenvalues of $I$.
\\
\\
This is the geometric meaning of setting $g = I$: the Fisher
information matrix defines a Riemannian metric, and discrimination
thresholds trace out the unit ball of the inverse metric.  The
criterion constant $c$ is a global scale factor that depends on the
experimental procedure (e.g., the target percent correct in a
forced-choice task) and is absorbed by the STRESS optimal scale
factor $F$ when comparing with data.

\subsection{Directional tuning and the rank-1 structure}
 
V1 neurons respond selectively to chromatic direction.  A neuron
tuned to a preferred direction $\nhat = (\cos\theta,\,\sin\theta)$
in the $(z,w)$ plane has a firing rate that depends on the stimulus
only through the scalar projection onto $\nhat$:
\begin{equation}\label{eq:tuning}
  f(\mathbf{s}) = \phi(\nhat \cdot \mathbf{s}),
\end{equation}
where $\phi$ is the neuron's (generally nonlinear) contrast-response
function.  Taking partial derivatives by the chain rule,
\[
  \frac{\partial f}{\partial s_i}
  = \phi'(\nhat \cdot \mathbf{s})\;\hat{n}_i,
\]
so the gradient $\nabla f = \phi'\,\nhat$.  Substituting
into~\eqref{eq:fisher_general} gives
\begin{equation}\label{eq:fisher_rank1}
  I = \frac{[\phi']^2}{\phi}\;\nhat\,\nhat^T.
\end{equation}
Note that any single Poisson neuron has rank-1 Fisher information,
since $I = f^{-1} \nabla f\,\nabla f^T$ is always an outer product
regardless of tuning shape.  The directional tuning
assumption~\eqref{eq:tuning} determines the \emph{direction} of this
rank-1 contribution: it is aligned with the neuron's preferred
chromatic direction $\nhat$.  The scalar prefactor
$[\phi']^2/\phi$ depends on the neuron's gain and operating point
but the geometric structure (the outer product $\nhat\,\nhat^T$)
depends only on the preferred direction.
 
\subsection{Population additivity}
 
Fisher information is additive across conditionally independent
neurons (independent spike counts given the
stimulus)~\cite{cover06}.  Cortical neurons do exhibit noise
correlations, but for populations with shared tuning the dominant
effect of correlations is to rescale the aggregate Fisher information
rather than change its directional structure~\cite{dayan01}.
If a population of $N_k$ neurons shares
preferred direction $\nhat_k$ and has aggregate scalar Fisher
information $\lambda_k$ (absorbing population size, per-neuron gain,
and operating point), then the total Fisher information of the
population is
\begin{equation}\label{eq:pop_sum}
  I_{\text{pop}} = \sum_k \lambda_k\;\nhat_k\,\nhat_k^T.
\end{equation}
For a metric tensor, the key requirement is that this matrix be
positive definite, which holds whenever the set of preferred
directions $\{\nhat_k\}$ spans $\mathbb{R}^2$ (i.e., not all neurons
are tuned to the same direction).
 
\subsection{Three V1 sub-populations}\label{sec:three_pops}
 
Electrophysiological recordings show that V1 neurons span a
continuum of chromatic preferences, but with clustering near the
cardinal axes and a broad distribution of intermediate (non-cardinal)
tunings~\cite{lennie90,johnson01,johnson04}.  We model this
distribution as three discrete sub-populations:
 
\begin{enumerate}[label=(\roman*)]
\item \emph{Parvocellular-recipient neurons}, tuned along the L-M
  axis ($\nhat = \mathbf{e}_z$).  Their aggregate Fisher information
  is $\lambda_P\,\mathbf{e}_z\mathbf{e}_z^T$.
\item \emph{Koniocellular-recipient neurons}, tuned along the S-cone
  axis ($\nhat = \mathbf{e}_w$).  Their aggregate Fisher information
  is $\lambda_K\,\mathbf{e}_w\mathbf{e}_w^T$, where $\lambda_K$
  differs from $\lambda_P$ because the koniocellular pathway has
  fewer neurons and different contrast gain~\cite{hendry00,tailby08}.
  We write $\lambda_K = \gamma^2\,\mathcal{N}(z)$, where $\gamma$
  captures the baseline sensitivity ratio and $\mathcal{N}(z)$ is a
  cross-channel normalization factor (see \ref{sec:ccn}).
\item \emph{Non-cardinal neurons}, tuned to a chromaticity-dependent
  intermediate direction $\nhat(\theta)$ where
  $\theta = \theta(z,w)$.  Their aggregate Fisher information is
  $\lambda_N\,\nhat\,\nhat^T$, where
  $\lambda_N = \eta^2\,\mathcal{S}(w)$ incorporates both the
  population weight $\eta^2$ and an S-cone disinhibition factor
  $\mathcal{S}(w)$ (Section~\ref{sec:scone_disinh}).
\end{enumerate}
 
\noindent
Applying~\eqref{eq:pop_sum} with these three populations gives the
combined Fisher information at fixed luminance:
\begin{equation}\label{eq:G_fixed_Y}
  G_0(z,w) = \lambda_P\,\mathbf{e}_z\mathbf{e}_z^T
  + \gamma^2\mathcal{N}(z)\,\mathbf{e}_w\mathbf{e}_w^T
  + \eta^2\mathcal{S}(w)\,\nhat\,\nhat^T.
\end{equation}
Since $\mathbf{e}_z$ and $\mathbf{e}_w$ are orthogonal and both
carry strictly positive coefficients, the first two terms alone form
a positive definite diagonal matrix, and the non-cardinal term (which
is positive semi-definite) can only increase the eigenvalues.  The
matrix is therefore positive definite for all values of $\theta$ and
all $\eta \geq 0$.
 \\
 \\
Because only ratios of the population weights are observable in
threshold data (STRESS normalizes by an optimal global scale factor),
we set $\lambda_P = 1$ without loss of generality.  The parameters
$\gamma$ and $\eta$ then measure konio and non-cardinal sensitivity
relative to the parvocellular population.
 
\subsection{Luminance scaling}\label{sec:lum_scaling}
 
The aggregate scalar Fisher information $\lambda_k$ of each
population depends on luminance through the firing rates of its
neurons.  If firing rates scale as a power law of luminance,
$f(\mathbf{s}; Y) = c(\mathbf{s})\,Y^a$, then the Poisson Fisher
information~\eqref{eq:fisher_general} becomes
\begin{equation}\label{eq:poisson_lum}
  I_{ij}(\mathbf{s}; Y)
  = \frac{1}{c\,Y^a}
  \frac{\partial(c\,Y^a)}{\partial s_i}
  \frac{\partial(c\,Y^a)}{\partial s_j}
  = \frac{[c'_i][c'_j]}{c}\;Y^a,
\end{equation}
where $c'_i = \partial c/\partial s_i$.  The $Y^{2a}$ from the
numerator partially cancels the $Y^a$ in the denominator (the
Poisson variance), leaving a net scaling of $Y^a$.  When $a = 0$ the chromatic Fisher information is luminance-independent, meaning the von~Kries
adaptation in the earlier stages has fully normalized the luminance
dependence.
\\
\\
Different pathways have different luminance-response
exponents~\cite{kaplan86,irvin93}.  Writing $a_P$ for the parvo
exponent and $a_K$ for the konio exponent, the luminance-dependent
Fisher information of each population factorizes as
$\lambda_k(z,w) \cdot Y^{a_k}$.  We absorb the
chromaticity-dependent scalar $\lambda_k(z,w)$ into the population
parameters (with $\lambda_P = 1$ by the normalization convention
above), giving the full metric
\begin{equation}\label{eq:G}
  G(z,w;\,Y) = G_P + G_K + G_N,
\end{equation}
where
\begin{align}
  G_P &= Y^{a_P}\;\mathbf{e}_z\mathbf{e}_z^T,
  \label{eq:GP}\\
  G_K &= \mathcal{N}(z)\;\gamma^2\, Y^{a_K}\;
  \mathbf{e}_w\mathbf{e}_w^T,
  \label{eq:GK}\\
  G_N &= \eta^2\,\mathcal{S}(w)\, Y^{(a_P + a_K)/2}\;
  \nhat\,\nhat^T.
  \label{eq:GN}
\end{align}
The parvo and konio terms each carry their own luminance exponent.
The non-cardinal term uses $Y^{(a_P+a_K)/2}$, the geometric mean of
the two pathway exponents.  This is a modeling choice reflecting the
convergent input that non-cardinal V1 neurons receive from both
the parvocellular and koniocellular streams; we do not have direct
measurements of a separate non-cardinal luminance exponent, so we
treat it as intermediate between the two pathways it draws from.

\subsection{Cross-channel normalization}\label{sec:ccn}
 
The two normalization factors $\mathcal{N}(z)$ and $\mathcal{S}(w)$
implement cross-channel gain control within the Fisher information
framework.  The konio cross-normalization factor
\begin{equation}\label{eq:normk}
  \mathcal{N}(z) = \frac{\tau}{\tau + z^2}
\end{equation}
suppresses koniocellular sensitivity at chromaticities with high L-M
opponent activity ($|z| \gg 0$).  This is a form of
Carandini--Heeger divisive normalization~\cite{carandini12} applied
to the chromatic domain: the normalization pool consists of L-M
opponent activity in the parvocellular surround, which suppresses the
S-cone pathway response.  Solomon, Peirce \&
Lennie~\cite{solomon04} measured precisely this type of
cross-mechanism suppression in macaque V1, finding that L-M gratings
reduce the responses of S-cone-driven neurons.
\\
\\
At the adaptation white ($z \approx 0$), $\mathcal{N} \approx 1$
and the konio channel operates at full sensitivity.  At
spectrum-locus chromaticities with strong L-M contrast
($z^2 \gg \tau$), $\mathcal{N} \to 0$ and the konio channel is
suppressed.
\\
\\
The S-cone disinhibition factor
\begin{equation}\label{eq:scone_disinh}
  \mathcal{S}(w) = 1 + \frac{\delta_{\mathrm{nc}}\,\sigma_w^2}
  {w^2 + \sigma_w^2}
\end{equation}
modulates the non-cardinal population in the opposite sense: it
boosts non-cardinal sensitivity when S-cone input is absent.
The neurophysiological basis and geometric consequences of this factor
are described in Section~\ref{sec:scone_disinh}.
\\
\\
The non-cardinal preferred direction varies across the chromaticity
plane as a linear function of the opponent coordinates:
\begin{equation}\label{eq:phi}
  \theta(z,w) = \theta_0 + \theta_1\, z
  + \theta_2\,(w - w_0),
  \qquad
  w_0 = \frac{(0.5)^{p_s}}{(0.5)^{p_s} + \kappa_s}.
\end{equation}
Here $w_0$ is the koniocellular coordinate at the adaptation white
($r = 0.5$ when all adapted cone signals equal unity).  The linear
field is the simplest parameterization that allows ellipse
orientations to vary across the chromaticity diagram.

\begin{proposition}[Positive definiteness]\label{prop:pd}
$G(z,w;\,Y)$ as defined in~\eqref{eq:G} is positive definite for all
$Y > 0$, $\gamma > 0$, $\tau > 0$, $\delta_{\mathrm{nc}} \geq 0$,
$\sigma_w > 0$, and $\eta \geq 0$.
\end{proposition}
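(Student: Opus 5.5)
The argument is a direct check of the quadratic form. For a nonzero vector $\mathbf{v} = (v_z, v_w)^T \in \R^2$, write
\[
  \mathbf{v}^T G\,\mathbf{v}
  = Y^{a_P} v_z^2
  + \gamma^2 \mathcal{N}(z)\, Y^{a_K} v_w^2
  + \eta^2 \mathcal{S}(w)\, Y^{(a_P+a_K)/2} (\nhat\cdot\mathbf{v})^2 .
\]
The plan is to show that the first two terms are a positive definite diagonal form and that the third term is nonnegative. This repeats the reasoning given after~\eqref{eq:G_fixed_Y}, now with the luminance factors included.

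\textbf{Parvo and konio terms.} The coefficients must be strictly positive for every admissible $(z,w,Y)$.
\begin{enumerate}[label=(\roman*)]
\item Since $Y>0$, any real power $Y^{a}$ is strictly positive. This holds for whatever real values $a_P, a_K$ take.
\item For $\mathcal{N}(z) = \tau/(\tau+z^2)$, the assumption $\tau>0$ gives numerator $\tau>0$ and denominator $\tau + z^2 \geq \tau > 0$. Hence $0 < \mathcal{N}(z) \leq 1$ for all real $z$. Here $z$ is real and finite by~\eqref{eq:z}, since $|z| < z_{\max}$.
\item Since $\gamma>0$, we have $\gamma^2>0$.
\end{enumerate}
So $G_P + G_K = \diag\bigl(Y^{a_P},\ \gamma^2\mathcal{N}(z)Y^{a_K}\bigr)$ is diagonal with strictly positive entries, and therefore positive definite.

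\textbf{Non-cardinal term.} Since $\sigma_w>0$, the denominator $w^2+\sigma_w^2$ is strictly positive, so $\mathcal{S}(w)$ is well defined. With $\delta_{\mathrm{nc}}\geq 0$ it satisfies $\mathcal{S}(w)\geq 1>0$. Combined with $\eta^2\geq 0$ and $Y^{(a_P+a_K)/2}>0$, the scalar prefactor of $G_N$ is nonnegative. The factor $(\nhat\cdot\mathbf{v})^2$ is nonnegative for every real $\theta(z,w)$ from~\eqref{eq:phi}. Hence $G_N \succeq 0$.

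\textbf{Conclusion.} Putting the pieces together,
\[
  \mathbf{v}^T G\,\mathbf{v} \;\geq\; \min\bigl(Y^{a_P},\,\gamma^2\mathcal{N}(z)Y^{a_K}\bigr)\,\|\mathbf{v}\|^2 \;>\; 0 .
\]
This lower bound also shows that the smallest eigenvalue of $G$ is bounded below pointwise by that minimum.

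There is no real obstacle here. The only care needed is to confirm that each factor is well defined and strictly positive on the stated parameter ranges, in particular that the denominators $\tau+z^2$ and $w^2+\sigma_w^2$ never vanish. Note that $\tau>0$ is exactly what keeps the konio eigenvalue from degenerating: $\mathcal{N}\to 0$ only in the limit $z^2\gg\tau$ and never at any finite $z$.
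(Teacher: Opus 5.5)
Your proof is correct and follows the paper's argument. Both show $0<\mathcal{N}(z)\le 1$ and $\mathcal{S}(w)\ge 1$, so $G_P+G_K$ is diagonal with strictly positive entries and $G_N$ is positive semi-definite; your explicit quadratic-form bound just unpacks the paper's ``positive definite plus positive semi-definite is positive definite'' step, and it gives the pointwise lower bound on the smallest eigenvalue as a bonus. (Minor: your aside $|z|<z_{\max}$ needs $\kappa_z>0$ and $z_{\max}>0$, which are not hypotheses of the proposition, but all you actually use is $z\in\mathbb{R}$, so that $\tau+z^2\ge\tau>0$.)
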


\begin{proof}
Since $\tau > 0$ and $z^2 \geq 0$, we have
$0 < \mathcal{N} \leq 1$.  The diagonal matrix
$\diag(Y^{a_P}, \mathcal{N}\,\gamma^2 Y^{a_K})$ therefore has
strictly positive entries for $Y > 0$ and $\gamma > 0$, hence is
positive definite.  Since $\delta_{\mathrm{nc}} \geq 0$ and
$\sigma_w > 0$, we have $\mathcal{S}(w) \geq 1 > 0$.
The rank-1 matrix
$\eta^2\,\mathcal{S}(w)\, Y^{(a_P+a_K)/2}\,\nhat\,\nhat^T$ is
positive semi-definite (its eigenvalues are $0$ and
$\eta^2\,\mathcal{S}\, Y^{(a_P+a_K)/2} \geq 0$).  The sum of a
positive definite matrix and a positive semi-definite matrix is
positive definite.
\end{proof}

\subsection{S-cone disinhibition of the non-cardinal population}
\label{sec:scone_disinh}

The non-cardinal V1 population receives convergent input from both
parvocellular and koniocellular pathways.  Solomon, Peirce \&
Lennie~\cite{solomon04} measured cross-mechanism suppression in
macaque V1 and found that S-cone signals modulate the responses of
non-cardinal chromatic neurons: S-cone activity suppresses the
sensitivity of neurons tuned to intermediate chromatic directions.
\\
\\
When S-cone input is absent, this suppression is released.  On the
spectrum locus above approximately 510~nm, the S-cone fundamental
sensitivity is negligible, so the koniocellular opponent coordinate
$w$ approaches zero.  
\\
\\
The S-cone disinhibition factor~\eqref{eq:scone_disinh}
resolves this tradeoff.  When $w \to 0$ (no S-cone signal),
$\mathcal{S} \approx 1 + \delta_{\mathrm{nc}}$, boosting
non-cardinal sensitivity by a factor of $1 + \delta_{\mathrm{nc}}$.
At the adaptation white ($w = w_0 \approx 6\sigma_w$ for the
fitted parameters), $\mathcal{S} \approx 1.27$, a modest residual
boost.  At chromaticities with high S-cone content
($w \gg w_0$), $\mathcal{S} \to 1$.  The ratio of the
disinhibition at S-cone-absent wavelengths to its value at the
adaptation white is approximately
$(1 + \delta_{\mathrm{nc}})/\mathcal{S}(w_0) \approx 9$,
providing strong differential sensitivity.
\\
\\
The important geometric property is that the koniocellular coordinate $w$
separates the Wright problem wavelengths from the MacAdam
centres.  At Wright 520--560~nm, $w/w_0$ ranges from approximately
0.03 to 0.23 while at all 25 MacAdam centres,
$w/w_0$ is substantially larger because broadband stimuli inside the
gamut retain measurable S-cone excitation even at red and green
chromaticities.  This separation is a
consequence of the spectrum locus having negligible S-cone content
above 510~nm, while all broadband stimuli retain measurable S-cone excitation.  No other
opponent coordinate provides comparable separation.

\section{The Metric}\label{sec:metric}

\subsection{Pullback to chromaticity coordinates}

The opponent coordinate map $\Phi\colon (x,y) \mapsto (z,w)$ at fixed
luminance $Y$ has a $2\times 2$ Jacobian
\begin{equation}\label{eq:jac}
  J_{\mathrm{opp}}
  = \frac{\partial(z,w)}{\partial(x,y)},
\end{equation}
computed numerically by central differences.  The Riemannian metric on
the chromaticity plane, expressed in CIE $(x,y)$ coordinates, is the
pullback:
\begin{equation}\label{eq:gchrom}
  g_{\mathrm{chrom}}(x,y;\,Y) = J_{\mathrm{opp}}^T\, G(z,w;\,Y)\,
  J_{\mathrm{opp}}.
\end{equation}

\subsection{Luminance metric: Magnocellulur pathway}
 
Luminance discrimination is mediated primarily by the magnocellular
pathway, which receives additive $L+M$ input and has high contrast
sensitivity~\cite{kaplan86,merigan93}.  We derive the luminance
metric from the same Poisson Fisher information framework used for
the chromatic channels.
\\
\\
A magnocellular neuron encoding luminance has a firing rate that
depends on $Y$ through a power law $f(Y) = c\,Y^{a_M}$, where
$a_M$ is the magnocellular luminance-response exponent and $c$
absorbs the neuron's baseline sensitivity.  The one-dimensional
Poisson Fisher information for the parameter $Y$ is
\[
  g_{YY}^{(\text{single})}
  = \frac{[f'(Y)]^2}{f(Y)}
  = \frac{[c\,a_M\,Y^{a_M - 1}]^2}{c\,Y^{a_M}}
  = c\,a_M^2\;Y^{a_M - 2}.
\]
Summing over a population of magnocellular neurons (as in
Section~\ref{sec:lum_scaling}) gives an aggregate Fisher information
\begin{equation}\label{eq:gYY_raw}
  g_{YY}(Y) = \lambda_M\;Y^{a_M - 2},
\end{equation}
where $\lambda_M$ absorbs the population size and per-neuron
parameters.
\\
\\
It is convenient to reparameterize by defining
$\beta_Y = (2 - a_M)/2$, so that $a_M = 2(1 - \beta_Y)$ and
$a_M - 2 = -2\beta_Y$.  Then~\eqref{eq:gYY_raw} becomes
\begin{equation}\label{eq:gYY}
  g_{YY}(Y) = \psi_0^2 \,/\, Y^{2\beta_Y},
\end{equation}
where $\psi_0^2 = \lambda_M$ absorbs the population-level prefactor.
The reparameterization is natural because $\beta_Y$ directly governs
the power law of luminance discrimination thresholds: since
$\Delta Y \propto 1/\sqrt{g_{YY}} \propto Y^{\beta_Y}$, a larger
$\beta_Y$ means thresholds grow faster with luminance.
\\
\\
The two classical limiting cases correspond to specific values of the
magnocellular response exponent.  When $a_M = 1$ (linear response),
$\beta_Y = 0.5$ and $\Delta Y \propto Y^{0.5}$, which is the
de~Vries--Rose law characteristic of photon-noise-limited detection.
As $a_M \to 0$ (nearly constant response), $\beta_Y \to 1$ and
$\Delta Y \propto Y$, approaching Weber's law for luminance
($\Delta Y / Y \approx \text{const}$), established since K\"onig \&
Brodhun~\cite{konig89}.  The fitted value $\beta_Y = 0.84$
corresponds to $a_M = 0.32$, a mildly sublinear magnocellular
response: luminance discrimination is closer to Weber's law than to
de~Vries--Rose, consistent with the psychophysical literature at
moderate photopic luminances~\cite{hood86}.  Note that exact Weber's
law ($\beta_Y = 1$ exactly) would require $a_M = 0$, making $g_{YY}$
identically zero in the Poisson model.  The fact that the fitted
$\beta_Y$ falls short of unity is therefore not merely an empirical
observation but a structural prediction of the Poisson framework: the
magnocellular pathway must retain some luminance dependence in its
firing rates for the Fisher information to be nonzero.

\subsection{The full three-dimensional metric}

The chromaticity and luminance metrics are assumed to be block diagonal:
\begin{equation}\label{eq:g3d}
  g(x,y,Y) =
  \begin{pmatrix}
    g_{\mathrm{chrom}}(x,y;\,Y) & \mathbf{0} \\
    \mathbf{0}^T & g_{YY}(Y)
  \end{pmatrix}.
\end{equation}
This corresponds to the assumption that the chromatic (parvo/konio)
and luminance (magno) pathways contribute independently to the Fisher
information.  Krauskopf, Williams \&
Heeley~\cite{krauskopf82} showed that the cardinal directions of
color space (L-M, S, and L+M) are approximately independent
mechanisms.  
\\
\\
The Koenderink et~al.~\cite{koenderink26} data provide a quantitative
test of this independence.  Each Koenderink ellipsoid is a
$3 \times 3$ covariance matrix $\Sigma$ in sRGB coordinates.  To
assess chromaticity-luminance coupling, we rotate each $\Sigma$ into
a basis where one axis is the achromatic direction
$\mathbf{a} = (1, 1, 1)/\sqrt{3}$ (equal increments in R, G, B
correspond to a pure luminance change) and the other two axes span
the chromatic plane perpendicular to $\mathbf{a}$.  In this basis,
the $3 \times 3$ covariance decomposes into a $2 \times 2$
chromatic block, a $1 \times 1$ luminance entry, and $2 \times 1$
cross terms.  If the block-diagonal assumption holds exactly, the
cross terms vanish.  In practice, the ratio of the cross-term
magnitudes to the geometric mean of the corresponding diagonal
entries ranges from $0.01$ to $0.22$ at the five achromatic
reference colours confirming
that the coupling is small but nonzero.

\section{Parameters}\label{sec:params}

The model has 17 free parameters, each with a neurophysiological origin:
\\[6pt]
\begin{tabular}{clll}
\toprule
\# & Symbol & Neurophysiological origin & Fitted value \\
\midrule
0 & $p_s$ & S-cone transduction exponent & 0.597 \\
1 & $\kappa_s$ & S-cone half-saturation & 8.616 \\
2 & $\gamma$ & Konio/parvo sensitivity ratio & 12.85 \\
3 & $\eta$ & Non-cardinal V1 sensitivity & 2.355 \\
4 & $\theta_0$ & Mean non-cardinal direction & $-2.244$ \\
5 & $\theta_1$ & Direction gradient along $z$ & 0.994 \\
6 & $\theta_2$ & Direction gradient along $w$ & 1.508 \\
7 & $z_{\max}$ & L-M saturation ceiling & 1.993 \\
8 & $p_z$ & L-M Naka--Rushton exponent & 1.022 \\
9 & $\kappa_z$ & L-M half-saturation & 1.393 \\
10 & $a_P$ & Parvo luminance exponent & 0.717 \\
11 & $a_K$ & Konio luminance exponent & 0.568 \\
12 & $\psi_0$ & Magnocellular Weber fraction & 0.738 \\
13 & $\beta_Y$ & Magno luminance exponent & 0.841 \\
14 & $\tau$ & Konio cross-normalization & 0.796 \\
15 & $\delta_{\mathrm{nc}}$ & S-cone disinhibition strength & 10.00 \\
16 & $\sigma_w$ & Disinhibition half-saturation & 0.012 \\
\bottomrule
\end{tabular}
\\[6pt]
The fitted values are from joint optimization across four
datasets (Section~\ref{sec:results}).  Parameters 0--9 govern the
chromaticity metric.  Parameters 10--11 control how the chromatic
metric scales with luminance.  Parameters 12--13 are constrained only
by three-dimensional data.  Parameter~14 controls the divisive
normalization of the koniocellular channel by parvocellular activity.
Parameters~15--16 control the S-cone disinhibition of the
non-cardinal population.

\section{Computing the Discrimination Ellipsoid}\label{sec:recipe}

This section collects all the pieces into a single recipe.  Given a
color stimulus with CIE coordinates $(x, y, Y)$ and an adaptation
white with cone signals $(L_W, M_W, S_W)$, the following steps
produce the predicted $3 \times 3$ discrimination ellipsoid in any
desired coordinate system.
\\
\\
\emph{Step 1. Cone signals and adaptation.}
Compute the tristimulus values $X = Yx/y$, $Z = Y(1{-}x{-}y)/y$,
then the cone signals $(L, M, S) = M_{\mathrm{SP}}\,(X, Y, Z)^T$
using~\eqref{eq:cone}.  Divide by the adaptation white:
$L_a = L/L_W$, $M_a = M/M_W$, $S_a = S/S_W$.
\\
\\
\emph{Step 2. Opponent coordinates.}
Compute the two chromaticity coordinates.  For the L-M channel,
apply the saturating Naka--Rushton to the log cone ratio:
\[
  u = \log(L_a / M_a), \qquad
  z = z_{\max}\;\mathrm{sign}(u)\;
  \frac{|u|^{p_z}}{|u|^{p_z} + \kappa_z}.
\]
For the S channel:
\[
  r = \frac{S_a}{L_a + M_a},
  \qquad
  w = \frac{r^{p_s}}{r^{p_s} + \kappa_s}.
\]
These depend only on chromaticity, not on luminance
(Proposition~\ref{prop:invariance}).
\\
\\
\emph{Step 3. Non-cardinal direction.}
Compute the preferred direction of the non-cardinal V1 population:
\[
  \theta = \theta_0 + \theta_1\, z + \theta_2\,(w - w_0),
  \qquad
  w_0 = \frac{(0.5)^{p_s}}{(0.5)^{p_s} + \kappa_s},
  \qquad
  \nhat = \begin{pmatrix} \cos\theta \\ \sin\theta \end{pmatrix}.
\]
\\
\\
\emph{Step 4. Channel metric.}
Compute the konio cross-channel normalization factor:
\[
  \mathcal{N} = \frac{\tau}{\tau + z^2}.
\]
Compute the S-cone disinhibition factor:
\[
  \mathcal{S} = 1 + \frac{\delta_{\mathrm{nc}}\,\sigma_w^2}
  {w^2 + \sigma_w^2}.
\]
Assemble the $2 \times 2$ Fisher information matrix in opponent
coordinates:
\[
  G = G_P + G_K + G_N =
  \begin{pmatrix}
    Y^{a_P} & 0 \\
    0 & \mathcal{N}\,\gamma^2\, Y^{a_K}
  \end{pmatrix}
  + \eta^2\,\mathcal{S}\, Y^{(a_P + a_K)/2}\, \nhat\,\nhat^T.
\]
The diagonal terms represent the parvocellular and koniocellular
populations, with konio sensitivity reduced by $\mathcal{N}$ at
chromaticities with high L-M opponent activity.  The rank-1 term
represents the non-cardinal population, with sensitivity modulated
by $\mathcal{S}$: strongly boosted when the S-cone pathway is inactive
($w \approx 0$) and only modestly boosted when S-cone input is
present ($w \gg \sigma_w$).
\\
\\
\emph{Step 5. Jacobian.}
Compute the $2 \times 2$ Jacobian $J_{\mathrm{opp}} =
\partial(z,w)/\partial(x,y)$ by repeating Steps~1--2 at four
neighboring chromaticities and using central differences:
\[
  [J_{\mathrm{opp}}]_{i1}
  = \frac{\Phi_i(x{+}h,\,y) - \Phi_i(x{-}h,\,y)}{2h},
  \qquad
  [J_{\mathrm{opp}}]_{i2}
  = \frac{\Phi_i(x,\,y{+}h) - \Phi_i(x,\,y{-}h)}{2h},
\]
where $\Phi = (z, w)$ and $h \approx 10^{-7}$.
\\
\\
\emph{Step 6. Chromaticity metric.}
Pull the channel metric back to CIE coordinates:
\[
  g_{\mathrm{chrom}} = J_{\mathrm{opp}}^T\, G\, J_{\mathrm{opp}}.
\]
This is a $2 \times 2$ positive definite matrix whose inverse gives
the discrimination ellipse at luminance $Y$.
\\
\\
\emph{Step 7. Luminance metric.}
Compute the scalar luminance metric:
\[
  g_{YY} = \psi_0^2 \,/\, Y^{2\beta_Y}.
\]
\\
\\
\emph{Step 8. Full 3D metric.}
Assemble the block-diagonal $3 \times 3$ metric in $(x, y, Y)$
coordinates:
\[
  g =
  \begin{pmatrix}
    g_{\mathrm{chrom}} & \mathbf{0} \\
    \mathbf{0}^T & g_{YY}
  \end{pmatrix}.
\]
\\
\\
\emph{Step 9 (if needed). Coordinate transform.}
If the data are in display coordinates $(R_\gamma, G_\gamma,
B_\gamma)$ rather than $(x, y, Y)$, compute the $3 \times 3$ Jacobian
$J_{\mathrm{full}} = \partial(x,y,Y)/\partial(R_\gamma, G_\gamma,
B_\gamma)$ by central differences on the display model, then
transform:
\[
  g_{\mathrm{RGB}} = J_{\mathrm{full}}^T\, g\, J_{\mathrm{full}}.
\]
\\
\\
\emph{Step 10. Output.}   The predicted discrimination ellipsoid is
$\Sigma = g_{\mathrm{RGB}}^{-1}$ (or $g^{-1}$ if working in
$(x,y,Y)$), up to the criterion constant $c^2$ that is absorbed
by the STRESS scale factor (Section~\ref{sec:fisher}).
The eigenvectors of $\Sigma$ give the principal axes of
the ellipsoid; the square roots of the eigenvalues give the
semi-axis lengths.  For a 2D chromaticity ellipse at fixed $Y$, use
$g_{\mathrm{chrom}}^{-1}$ from Step~6.
\\
\\
The entire computation requires only matrix multiplication, one
logarithm, one power function, and numerical differentiation.  No
iterative solvers or special functions are needed.

\section{Empirical Validation}\label{sec:results}

\subsection{Datasets}

We test the model against four datasets that span threshold
discrimination at varied chromaticities, luminances, and stimulus
conditions:

\begin{enumerate}[label=(\roman*)]
\item \emph{MacAdam (1942)}~\cite{macadam42}: 25 threshold
  discrimination ellipses, single observer (PGN), constant luminance
  $Y = 48$~cd/m$^2$, Illuminant~C adaptation.  Data from Wyszecki \&
  Stiles~\cite{wyszecki00}.
\item \emph{Koenderink et~al.\ (2026)}~\cite{koenderink26}: 35
  three-dimensional ellipsoids (Notable Qualitative Differences), 8
  observers (group median), $Y = 7.1$--$129.2$~cd/m$^2$, D65
  adaptation.
\item \emph{Wright (1941)}~\cite{wright41}: 19 wavelength
  discrimination thresholds ($\Delta\lambda$ at threshold) along the
  spectrum locus from 440--620~nm, single observer, $Y = 48$~cd/m$^2$,
  Illuminant~C adaptation.
\item \emph{Huang et~al.\ (2012)}~\cite{huang15}: 17 chromaticity
  discrimination ellipses at the CIE-recommended colour centres,
  16 observers, pass/fail threshold method, printed samples under D65,
  CIE 1964 10$^\circ$ observer.  The centres span luminances
  $Y = 9.4$--$69.0$~cd/m$^2$ and chromaticities from near-neutral
  to high chroma.
\end{enumerate}

\subsection{Coordinate transformations}\label{sec:transforms}

The model produces a $2\times 2$ chromaticity metric
$g_{\mathrm{chrom}}(x,y;\,Y)$ in CIE~$(x,y)$ coordinates and a
$3\times 3$ metric $g(x,y,Y)$ in CIE~$(x,y,Y)$ coordinates.  Each
dataset reports discrimination data in a different coordinate system.
For each dataset we choose whichever transformation direction avoids
numerical instability: the observed data are transformed to CIE~$xy$
where possible, and the model is transformed to the data's native
coordinates where necessary.  In all cases the predicted and observed
quantities are compared in the same coordinate system.  The
transformations are as follows.
\\
\\
\emph{MacAdam.}  The data are already in CIE~$(x,y)$ coordinates.
Each ellipse is specified by semi-axes $(a,b)$ and orientation
$\vartheta$, which define an observed $2\times 2$ metric tensor
directly:
\[
  g_{\mathrm{obs}} = R(\vartheta)\,
  \diag(1/a^2,\;1/b^2)\, R(\vartheta)^T,
\]
where $R(\vartheta)$ is a rotation matrix.  No coordinate
transformation is needed; the model's $g_{\mathrm{chrom}}$ is
compared directly with $g_{\mathrm{obs}}$ at each ellipse centre.
\\
\\
\emph{Koenderink.}  The data are $3\times 3$ covariance matrices
$\Sigma_{\mathrm{obs}}$ in gamma-encoded sRGB coordinates.  The
comparison is performed in sRGB: the model's metric is transformed
to sRGB via the $3\times 3$ Jacobian
\[
  J_{\mathrm{full}} =
  \frac{\partial(x,\,y,\,Y)}%
       {\partial(R_\gamma,\, G_\gamma,\, B_\gamma)}
\]
computed by central differences ($h = 10^{-7}$) through the sRGB
display model (gamma decoding followed by the $3\times 3$
sRGB-to-XYZ matrix at the display's peak luminance
$Y_{\mathrm{white}} = 214$~cd/m$^2$).  The predicted covariance in
sRGB is then
\[
  \Sigma_{\mathrm{pred}} =
  \bigl(J_{\mathrm{full}}^T\, g\, J_{\mathrm{full}}\bigr)^{-1}.
\]
Equivalently, one could transform $\Sigma_{\mathrm{obs}}$ to
$(x,y,Y)$ via $J_{\mathrm{full}}\,\Sigma_{\mathrm{obs}}\,
J_{\mathrm{full}}^T$ and compare with $g^{-1}$; the two approaches
are related by a change of basis and yield the same STRESS.
Predicted and observed covariances are compared along 26 probe
directions.
\\
\\
\emph{Wright.}  The data are scalar wavelength thresholds
$\Delta\lambda$ at 19 points along the spectrum locus.  The
chromaticity coordinates $(x(\lambda), y(\lambda))$ and their
gradients $\mathbf{v}(\lambda) = (\partial x/\partial\lambda,\;
\partial y/\partial\lambda)$ were computed from the CIE~1931
$2^\circ$ colour matching functions at 0.5~nm spacing using central
differences.  The model predicts a threshold proportional to the
inverse of the metric's sensitivity along the wavelength direction:
\[
  \Delta\lambda_{\mathrm{pred}} \;\propto\;
  \frac{1}{\sqrt{\mathbf{v}^T\, g_{\mathrm{chrom}}\, \mathbf{v}}}.
\]
The proportionality constant (which depends on the observer's
criterion) is absorbed by the optimal scaling factor $F$ in STRESS.
\\
\\
\emph{Huang.}  The data are ellipses in the CIELAB
$a^*_{10}\,b^*_{10}$ plane (10$^\circ$ observer, D65 white).  Each
ellipse is specified by a semi-major axis $A$, an axis ratio $A/B$,
and an orientation $\vartheta$.  These define an observed metric
tensor in $(a^*,b^*)$ coordinates:
\[
  g_{a^*b^*} = R(\vartheta)\,
  \diag(1/A^2,\;1/B^2)\, R(\vartheta)^T.
\]
To transform into CIE~$(x,y)$ we compute the $2\times 2$ Jacobian
$J_{ab} = \partial(a^*_{10},b^*_{10})/\partial(x,y)$ numerically
($h = 10^{-5}$) by perturbing $(x,y)$ at fixed $Y$ and passing
through the CIELAB forward equations with the 10$^\circ$ D65 white
point $(X_n, Y_n, Z_n) = (94.811, 100.0, 107.304)$.  The observed
metric in~$(x,y)$ coordinates is then
\[
  g_{\mathrm{obs}} = J_{ab}^T\, g_{a^*b^*}\, J_{ab},
\]
which is compared with $g_{\mathrm{chrom}}$ evaluated at the
colour centre.  Note that the model uses $2^\circ$ cone fundamentals
throughout which creates small errors in our predictions for this dataset

\subsection{Evaluation measures}

\emph{STRESS} (Standardized Residual Sum of Squares) compares
predicted and observed ellipse radii at multiple probe directions,
with an optimal global scaling factor $F$:
\begin{equation}\label{eq:stress}
  \mathrm{STRESS} = 100\,\sqrt{\frac{\sum_i (r_i - F p_i)^2}
  {\sum_i r_i^2}}
\end{equation}
where $r_i$ and $p_i$ are observed and predicted radii in probe
direction~$i$.

\subsection{Results}

\begin{table}[ht!]
\centering
\label{tab:results}
\begin{tabular}{llcccc}
\toprule
Dataset & Model  & STRESS  \\
\midrule
MacAdam & CIELAB  & 41.9 \\
MacAdam & CIEDE2000  & 37.6  \\
MacAdam & CIECAM02-UCS  & 30.2 \\
MacAdam & CAM16-UCS  & 30.5 \\
MacAdam & Fisher Information   & 23.9  \\
\midrule
Koenderink et al. & CIELAB  & 26.2  \\
Koenderink et al. & CIEDE2000  & 25.2  \\
Koenderink et al. & CIECAM02-UCS  & 28.1  \\
Koenderink et al. & CAM16-UCS  & 29.2  \\
Koenderink et al. & Fisher Information  & 20.8  \\
\midrule
Wright & CIELAB  & 61.3  \\
Wright & CIEDE2000  & 59.3  \\
Wright & CIECAM02-UCS  & 58.8  \\
Wright & CAM16-UCS  & 62.6  \\
Wright & Fisher Information  & 30.1  \\
\midrule
Huang et al. & CIELAB  & 35.8  \\
Huang et al. & CIEDE2000  & 30.7  \\
Huang et al. & CIECAM02-UCS  & 28.8 \\
Huang et al. & CAM16-UCS  & 26.1  \\
Huang et al. & Fisher Information  & 30.8  \\
\bottomrule
\end{tabular}
\caption{Performance across four datasets. Fisher Information is the 17-parameter
model optimized simultaneously across all four datasets with weights
$1.5\times$ MacAdam, $1.0\times$ Koenderink, $1.0\times$ Wright,
$0.75\times$ Huang.}
\end{table}
\noindent
The Fisher Information model, with a single set of 17 parameters fitted
jointly to four datasets, is the only model that achieves competitive
performance across all datasets simultaneously.
\\
\\
On MacAdam it achieves STRESS of 23.9 versus 30.2 for
the best standard model (CIECAM02-UCS~\cite{luo06}), a 21\%
improvement (Figure~\ref{fig:macadam}).  On Koenderink it achieves
STRESS of 20.8 (Figure~\ref{fig:koenderink}).
\\
\\
On Wright wavelength discrimination, the Fisher Information model achieves
STRESS of 30.1 versus 58.8 for the best standard model
(CIECAM02-UCS), a 49\% improvement.  All standard metrics cluster
at STRESS 58--63 on this dataset because none possess a mechanism
for the large sensitivity variations along the spectrum locus.
Two normalization mechanisms contribute to this improvement.  The
konio cross-normalization factor $\mathcal{N}$ suppresses the
S-cone pathway at chromaticities with high L-M opponent activity:
at blue-green wavelengths around 470--500~nm, the substantial
L-M opponent contrast reduces $\mathcal{N}$ well below unity,
suppressing the excess konio sensitivity that
otherwise predicts thresholds well below Wright's observations.
The S-cone disinhibition factor $\mathcal{S}$ boosts non-cardinal
sensitivity at chromaticities where the S-cone pathway is
inactive.
\\
\\
On the Huang et al.\ threshold ellipses (Figure~\ref{fig:huang}), the model
achieves STRESS of 30.8, comparable to CIEDE2000 (30.7) and better
than CIELAB (35.8), though below CAM16-UCS (26.1) on this
individual dataset.
CAM16-UCS has an advantage on Huang,
likely because its chromatic adaptation transform and chroma
compression were tuned to suprathreshold data at varied chromaticities
that partially overlap with the Huang conditions.  However, CAM16-UCS
fails on Wright (62.6), demonstrating that no standard metric
achieves the cross-dataset balance of the Fisher Information model.

\begin{figure}[H]
\centering
\includegraphics[width=0.85\textwidth]{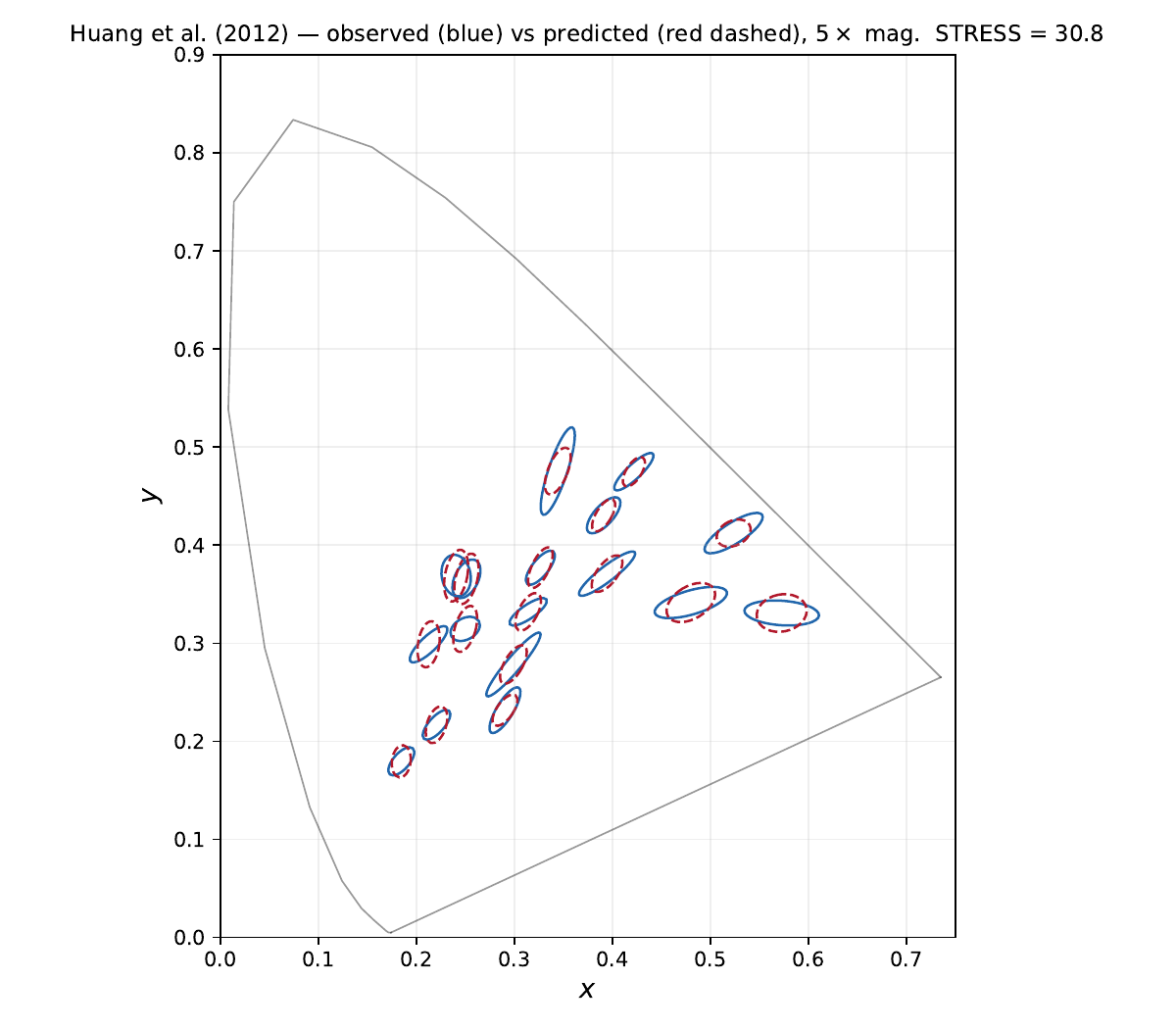}
\caption{Huang et~al.\ (2012) threshold ellipses: observed (blue,
solid) versus predicted (red, dashed) at $5\times$ magnification in
CIE~$(x,y)$ coordinates, after transforming the original CIELAB
ellipses.}
\label{fig:huang}
\end{figure}

\begin{figure}[H]
\centering
\includegraphics[width=\textwidth]{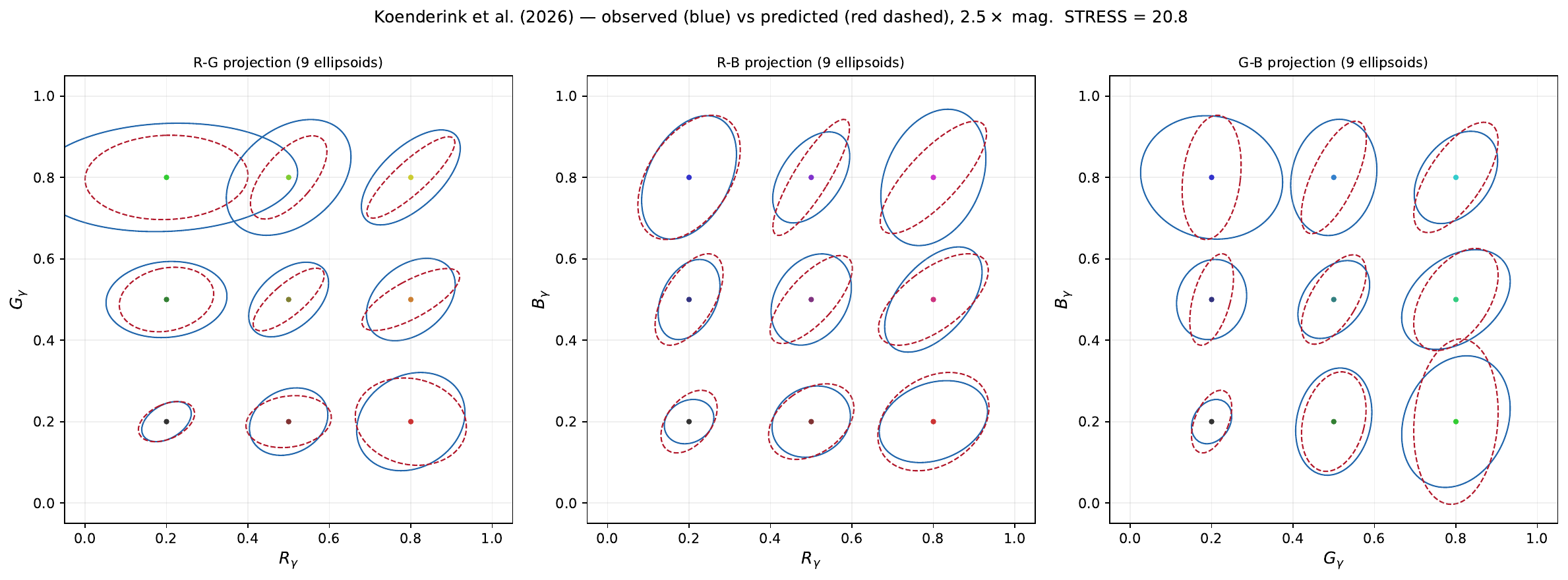}
\caption{Koenderink et~al.\ (2026) discrimination ellipsoids
projected onto three pairs of sRGB coordinates: observed (blue,
solid) versus predicted (red, dashed) at $2.5\times$ magnification.
A mostly non-overlapping subset of ellipsoids is shown in each panel (number
indicated in subtitle); the STRESS-optimal scale factor is computed
from all 35 ellipsoids.  Centre dots are coloured by sRGB value.
STRESS = 20.8.}
\label{fig:koenderink}
\end{figure}

\begin{figure}[H]
\centering
\includegraphics[width=0.85\textwidth]{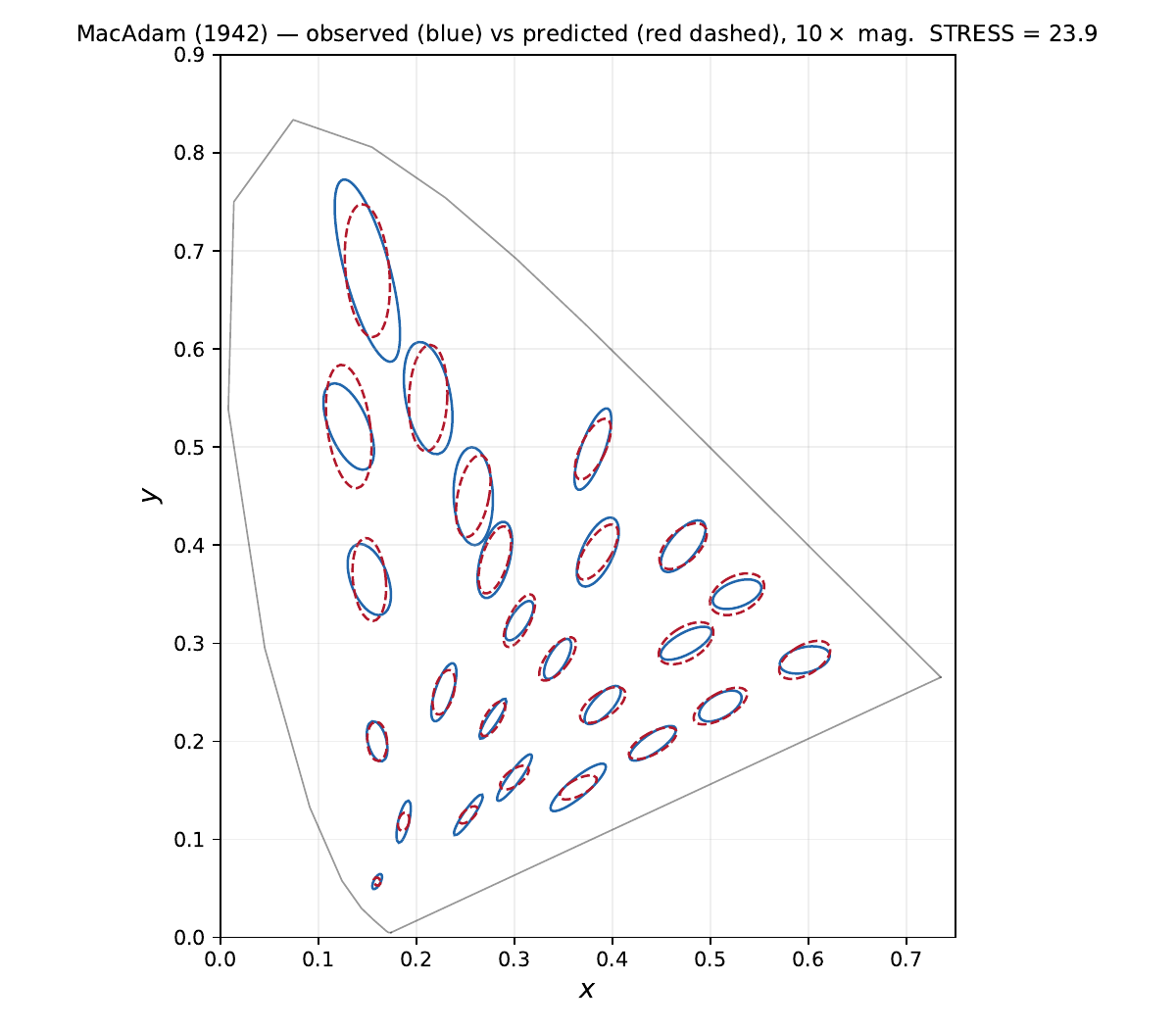}
\caption{MacAdam (1942) discrimination ellipses: observed (blue, solid)
versus predicted (red, dashed) at $10\times$ magnification.  The
predicted ellipses use the STRESS-optimal global scale factor $F$.
The model captures ellipse orientations and aspect ratios across
the chromaticity diagram.
STRESS = 23.9.}
\label{fig:macadam}
\end{figure}

\end{document}